\newcommandx{\unsure}[2][1=]{\todo[linecolor=red,backgroundcolor=red!25,bordercolor=red,#1]{#2}}
\DeclareMathOperator{\polylog}{polylog}
\newcommand{\norm}[1]{\left\lVert#1\right\rVert}
\g@addto@macro\bfseries{\boldmath}
\DeclareFontFamily{U}{mathx}{\hyphenchar\font45}
\DeclareFontShape{U}{mathx}{m}{n}{
      <5> <6> <7> <8> <9> <10>
      <10.95> <12> <14.4> <17.28> <20.74> <24.88>
      mathx10
      }{}
\DeclareSymbolFont{mathx}{U}{mathx}{m}{n}
\DeclareMathSymbol{\bigtimes}{1}{mathx}{"91}
\definecolor{DarkRed}{rgb}{0.5,0.1,0.1}
\definecolor{DarkBlue}{rgb}{0.1,0.1,0.5}
\definecolor{ForestGreen}{rgb}{0.1333,0.5451,0.1333}
\definecolor{Red}{rgb}{0.9,0,0}
\crefname{property}{property}{Property}
\crefname{equation}{eq}{Eq}
\def\BState{\State\hskip-\ALG@thistlm}
\newtheorem{theorem}{Theorem}
\newtheorem{lemma}{Lemma}[section]
\newtheorem{corollary}[lemma]{Corollary}
\newtheorem{claim}[lemma]{Claim}
\newtheorem{definition}[lemma]{Definition}
\newtheorem*{claim*}{Claim}
\newtheorem*{proposition*}{Proposition}
\newtheorem*{lemma*}{Lemma}
\newtheorem*{problem*}{Problem}
\crefname{lemma}{Lemma}{Lemmas}
\crefname{claim}{Claim}{Claims}
\newtheorem{mdresult}{Result}
\newenvironment{result}{\begin{mdframed}[backgroundcolor=lightgray!40,topline=false,rightline=false,leftline=false,bottomline=false,innertopmargin=2pt]\begin{mdresult}}{\end{mdresult}\end{mdframed}}
\newtheorem{remark}[lemma]{Remark}
\newtheorem{observation}[lemma]{Observation}
\newtheoremstyle{restate}{}{}{\itshape}{}{\bfseries}{~(restated).}{.5em}{\thmnote{#3}}
\theoremstyle{restate}
\def\congest{\textnormal{\textsf{CONGEST}}\xspace}
\def\local{\textnormal{\textsf{LOCAL}}\xspace}
\renewcommand{\qed}{\nobreak \ifvmode \relax \else
      \ifdim\lastskip<1.5em \hskip-\lastskip
      \hskip1.5em plus0em minus0.5em \fi \nobreak
      \vrule height0.75em width0.5em depth0.25em\fi}
\begin{document}

\title{Improved Bounds for Distributed Load Balancing\footnote{An extended abstract of this paper appears in {DISC 2020}.}}
\author{
  Sepehr Assadi\footnote{(sepehr.assadi@rutgers.edu) Department of Computer Science, Rutgers University.}
  \and
  Aaron Bernstein\footnote{(bernstei@gmail.com) Department of Computer Science, Rutgers University.}
  \and
  Zachary Langley\footnote{(zach.langley@rutgers.edu) Department of Computer Science, Rutgers University.}
}

\date{}

\maketitle

\pagenumbering{roman}

\begin{abstract}
In the load balancing problem, the input is an $n$-vertex bipartite graph $G = (C \cup S, E)$---where the two sides of the bipartite graph are referred to as the clients and the servers---and a positive 
weight for each client $c \in C$. The algorithm must assign each client $c \in C$ to an adjacent server $s \in S$. The load of a server is then the weighted sum of all the clients assigned to it.
The goal is to compute an assignment that minimizes some function of the server loads, typically either the maximum server load (i.e., the $\ell_{\infty}$-norm) or the $\ell_p$-norm of the server loads.
This problem has a variety of applications and has been widely studied under several different names, including: scheduling with restricted assignment, semi-matching, and distributed backup placement.

\medskip

We study load balancing in the distributed setting.  There are two existing results.
Czygrinow et al.\ [DISC 2012] showed a 2-approximate \textsf{LOCAL} algorithm for unweighted clients with round-complexity $O(\Delta^5)$, where $\Delta$ is the maximum degree of the input graph.
Halld{\'o}rsson et al.\ [SPAA 2015] showed an $O(\log{n}/\log\log{n})$-approximate \textsf{CONGEST} algorithm for unweighted clients and $O(\log^2\!{n}/\log\log{n})$-approximation for weighted clients with round-complexity 
$\polylog(n)$. 

\medskip

In this paper, we show the first distributed algorithms to compute an $O(1)$-approximation to the load balancing problem in $\polylog(n)$ rounds:
\smallskip
\begin{itemize}
	\item In the \textsf{CONGEST} model, we give an $O(1)$-approximation algorithm in $\polylog(n)$ rounds for unweighted clients. For weighted clients,  the approximation ratio is $O(\log{n})$.
	
	\smallskip
	
	\item In the less constrained \textsf{LOCAL} model, we give an $O(1)$-approximation algorithm for weighted clients in $\polylog(n)$ rounds.
\end{itemize}

Our approach also has implications for the standard sequential setting in which we obtain the first $O(1)$-approximation for this problem that runs in near-linear time. A $2$-approximation is already known, but it requires solving a linear program and is hence much slower. Finally, we note that all of our results \textit{simultaneously} approximate all $\ell_p$-norms, including the $\ell_{\infty}$-norm.

\end{abstract}

\clearpage

\setcounter{tocdepth}{3}
\tableofcontents

\clearpage

\pagenumbering{arabic}
\setcounter{page}{1}

\section{Introduction}

 \newcommand{\sepehr}[1]{\textcolor{blue}{[\textbf{Sepehr:} #1]}}

In this paper, we study the \emph{load balancing} problem.  The input is a
bipartite graph $G=(C\cup S,E)$, where we refer to the sets $C$ and $S$ the
\emph{clients} and \emph{servers}, respectively.  The goal is to find an
assignment of clients to servers such that no server is assigned too many
clients.  To be more precise, we define the load of a server in an assignment
to be the number of clients assigned to it, and we are interested in finding an
assignment that minimizes the maximum load of any server (or minimizes the
$\ell_p$-norm of the server loads---we will discuss this objective more later in the introduction).

The load balancing problem has a rich history in the scheduling literature as
the \textit{job scheduling with restricted assignment
problem}~\cite{H73,BCS74,LL04,HLLT06}, in the distributed computing literature
as the \textit{backup placement problem}~\cite{HKPR18,OBL18,BO20}, in sensor
networks~\cite{SSK06,MT08} and peer-to-peer systems~\cite{KSTZ04,STZ04a,STZ04b}
as the \textit{load-balanced data gathering tree construction problem}, and more
generally as a relaxation of the bipartite matching problem known as the
\textit{{semi-matching} problem}~\cite{HLLT06,CHSW12,FLN14,KR16}.  We refer the
reader to~\cite{HLLT06,FLN14,HKPR18} for more background.

Our primary focus in this paper is on the load balancing problem in a
distributed setting, where clients and servers correspond to separate nodes in
a network.  Communication through the network happens in synchronous rounds,
where in each round, every node can send $O(\log{n})$ bits to its neighbors
over any of its incident edges (formally, we work in the \congest{} model---see
\Cref{sec:prelim} for more details).  In the distributed setting, the
load balancing problem generalizes the distributed backup placement problem
with replication factor one (introduced in~\cite{HKPR18}), where the nodes
(corresponding to clients) in a distributed network may have memory faults and
therefore wish to store backup copies of their data at neighboring nodes
(corresponding to servers). Since backup-nodes may incur faults as well, the
number of nodes that select the same backup-node should be minimized.  See
\Cref{app:dbp} for the exact formulation of the distributed backup
placement problem and for how some of our results extend to the more general
version of the problem with arbitrary replication factor. 

A simple distributed algorithm for the load balancing problem in which the
nodes myopically reassign themselves to a server with smaller load eventually
converges to an $O(\frac{\log{n}}{\log\log{n}})$-approximation, where $n$ is
the number of nodes in the network~\cite{GLMM06,KTM12}, but as was shown in
Halld{\'o}rsson~et~al.~\cite{HKPR18}, the algorithm requires $\Omega(\sqrt{n})$
rounds.  The same paper~\cite{HKPR18} shows a way of circumventing this costly
process and gives a distributed algorithm that achieves the same
$O(\frac{\log{n}}{\log\log{n}})$-approximation in only $\polylog(n)$ rounds.
On the other hand, Czygrinow~et~al.~\cite{CHSW12} show a distributed
$O(1)$-approximation (precisely, a $2$-approximation) that requires
$O(\Delta^5)$ rounds, where $\Delta$ is the maximum degree of a node in the
network.  This algorithm is highly efficient for low-degree networks but is
again too expensive for high-degree graphs.  

This state-of-affairs is the starting point of our work: \emph{Can we obtain
the best of both worlds, namely, an $O(1)$-approximation algorithm in
$\polylog(n)$ rounds?} 

\subparagraph{Our first contribution.} Our first main contribution in this
paper is an affirmative answer to this question. 

\begin{result}[Formalized in \Cref{thm:hkpr}]\label{res:1}
  We give an $O(1)$-approximate randomized distributed algorithm for load
  balancing in the \congest{} model that runs in $O(\log^5\!n)$ rounds. 
\end{result}
	
At the core of our algorithm is a new structural lemma for the load balancing
problem.  Informally speaking, we show that eliminating all ``short augmenting
paths'' of length $O(\log{n})$ is sufficient to assign \textit{all} clients to
servers with load a constant factor as much as the optimum
(\Cref{lem:expansion}).  In conjunction with ideas
from~\cite{HKPR18}, this effectively reduces the load balancing problem to that
of finding a matching with no short augmenting paths, which can be solved using
the by-now standard algorithm of Lotker~et~al.~\cite{LPP15}.  

\subparagraph{Our second contribution.} Next, we consider the \emph{weighted}
load balancing problem in which every client comes with a weight.  The load of
a server is then the total weight of the clients assigned to it.  The goal, as
before, is to minimize the maximum load of any server.
Halld{\'o}rsson~et~al.~\cite{HKPR18} also studied the weighted problem and gave
an $O(\frac{\log^{2}{\!n}}{\log\log{n}})$-approximation in $\polylog(n)$ rounds
using a simple reduction to the unweighted case. 

Using the same weighted-to-unweighted reduction, our algorithm in
\Cref{res:1} also implies an $O(\log{n})$-approximation for the weighted
load balancing problem in $\polylog(n)$ rounds of the \congest{} model.  Our
main technical contribution in this paper is a new algorithm for this problem
that achieves an $O(1)$-approximation in the less constrained \local{} model,
in which communication over edges in each round is unbounded. 

\begin{result}[Formalized in \Cref{thm:pnorm-weighted}]\label{res:2}
  We give an $O(1)$-approximate randomized distributed algorithm for weighted
  load balancing in $O(\log^3\!{n})$ rounds of the \local{} model. 
\end{result}

Our \local{} algorithm consists of two main components: a distributed algorithm
for (approximately) solving a relaxed version of the problem where each client
$c$ with weight $w(c)$ should be assigned to $w(c)$ adjacent servers with
multiplicity---a \emph{split assignment}---and a novel distributed rounding
procedure.  Using our structural result in \Cref{lem:expansion}, we can
find a split assignment by approximately solving (or rather, eliminating short
augmenting paths in) a generalized $b$-matching problem with \emph{edge
capacities}. We are not aware of any efficient algorithm for this problem in
the \congest{} model, but we can show that a simple extension of the work
of~\cite{LPP15} can solve this problem in $\polylog(n)$ rounds in the \local{}
model.  The rounding step is also based on a new application of our
\Cref{lem:expansion} that allows us to circumvent the typical use of
``cycle canceling'' procedures for rounding fractional matching LP solutions
into integral ones, which do not translate to efficient distributed algorithms.

\medskip
We now turn to two important extensions of \Cref{res:1} and \Cref{res:2}.  The
first is the more general problem of all-norm load balancing, and the second is
a fast sequential algorithm.

\subparagraph{Approximating all norms.} Recall that our goal in the load
balancing problem has been to minimize the maximum load of any server. Assuming
we denote the loads of servers under some assignment $A$ by a vector $L_A :=
[L_A(s_1),L_A(s_2),\ldots,L_A(s_n)]$ for all $s_i \in S$, minimizing the
maximum server load is equivalent to minimizing $\norm{L_A}_{\infty}$, i.e.,
the $\ell_\infty$-norm of $L_A$.  Depending on the application, however,
miniziming this norm may not be the most natural notion of a ``balanced''
assignment; if some server requires vastly more load than the other servers, an
$\ell_\infty$-norm-minimizing assignment may put needlessly large load on those
other servers.

As a result, it is natural to consider minimizing some other $\ell_p$-norm of
$L_A$ for some $p \geq 1$.  This is done, for instance,
in~\cite{HLLT06,FLN14,KR16}, which considered $\ell_2$-norms. An even more
general objective is the \emph{all-norm} problem, studied
in~\cite{AERW04,BKPPS17,CS19,HLLT06}, where the goal is to
\emph{simultaneously} optimize with respect to every $\ell_p$-norm.  These
results compute an assignment which is an $O(1)$-approximation (or even
optimal) simultaneously with respect to all $\ell_p$-norms, including
$p=\infty$ (\textit{a priori}, even the existence of such an assignment is not
clear).

All of our results extend to the all-norm problem without any increase in
approximation factor or round-complexity.  In particular, in the \congest{}
model, we give randomized distributed $O(1)$- and $O(\log{n})$-approximation
algorithms for all-norm load balancing in $\polylog(n)$ rounds, in the
unweighted and weighted variant of the problem, respectively
({\Cref{thm:unweighted-pnorm}).  In the \local{} model, the
approximation ratio for the weighted problem can be reduced to $O(1)$ as well
(\Cref{thm:pnorm-weighted}). 

\subparagraph{Faster sequential algorithms.} Finally, we show that our new
approach to weighted load balancing can also be used to design a near-linear
time algorithm for this problem in the \emph{sequential} setting.  We give a
deterministic $O(m\log^{3}\!{(n)})$ time algorithm for the $O(1)$-approximate
all-norm load balancing problem in the sequential setting
(\Cref{thm:sequential}). 

Previously, a deterministic $O(m\sqrt{n}\log{n})$ time for the exact problem in
case of unweighted graphs was given in~\cite{FLN14}.  The weighted variant of
the problem is NP-hard~\cite{AERW04}; $2$-approximate algorithms were shown
in~\cite{AERW04} and~\cite{CS19}, but they are based on solving, respectively,
the linear and convex programming relaxations of the problem exactly using
the ellipsoid algorithm, and thus are much slower than the algorithm we
present.

\section{Preliminaries}\label{sec:prelim}

\subparagraph{Notation.} For any function $f : A \to \mathbb{N}$ and $B
\subseteq A$, we use the notation $f(B) = \sum_{b \in B} f(b)$ to sum $f$ over
all elements in $B$.  For any integer $t \geq 1$, we denote $[t] :=
\{1,\ldots,t\}$. 

Throughout, we assume $G = (C \cup S, E)$ is a bipartite graph.  We refer to
$C$ and $S$ as the \emph{clients} and the \emph{servers}, respectively.  We let
$uv$ denote the edge between vertices $u$ and $v$ and let $\delta(v)$ denote
the set of edges incident to the vertex $v$.  We use $n$ as number of vertices in
$G$ and $m$ as the number of edges in $G$.

\subparagraph{Load balancing.}
In the {load balancing problem}, the input is a bipartite graph $G = (C \cup S,
E)$ together with a {client weight function} $w : C \to [W]$.  The output is an
{assignment} $A : C \to S$ mapping every client to one of its adjacent servers.
The \emph{load} $L_A(s)$ of a server $s \in S$ under assignment $A$ is the sum
of the weights of the clients assigned to it: $L_A(s) = w(A^{-1}(s))$.  The
\emph{maximum load} of an assignment $A$ is the maximum load of any server under
$A$.  We refer to the problem of computing an assignment of minimum load as the
\emph{(weighted) min-max load balancing problem}.

As mentioned in the introduction, the min-max objective can be generalized by
considering any $\ell_p$-norm of $L_A$, defined as $\norm{L_A}_p =
{\left(\sum_{s \in S} {(L_A(s))}^p\right)}^{1/p}$.  For brevity, we also use
the notation $\norm{A}_p := \norm{L_A}_p$.  In the language of norms, the
min-max objective corresponds to minimizing the load vector's
$\ell_\infty$-norm.  When the goal is to find an assignment $A$ that
simultaneously minimizes $\norm{A}_p$ for all $p \ge 1$, including $p=\infty$,
the problem is called the \emph{(weighted) all-norm load balancing problem}.
Prior results in~\cite{AERW04,BKPPS17,CS19,HLLT06} show the \textit{existence}
of an assignment that can (approximately) minimize all these norms
simultaneously.  In particular, we use the following result due to
Alon~et~al.~\cite{AAWY97} in our proofs (see also~\cite{HLLT06, BKPPS17}). 

\begin{lemma}[\cite{AAWY97}]\label{lem:optimal-exists}
  Given any instance of the unweighted load balancing problem, there exists an
  assignment $A^*$ that simultaneously minimizes $\norm{A^*}_p$ for all $p \ge
  1$, including $p = \infty$.
\end{lemma}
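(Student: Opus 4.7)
The plan is to exhibit a single assignment whose sorted load vector is \emph{majorized} by that of every other assignment and then invoke Schur-convexity of $x \mapsto x^p$ to obtain the all-norm bound. Concretely, for each assignment $A$, let $\sigma(A) \in \mathbb{N}^{|S|}$ denote the load vector $(L_A(s))_{s \in S}$ sorted in non-increasing order, and choose $A^*$ so that $\sigma(A^*)$ is lexicographically smallest; this is well-defined because the set of assignments is finite. By the Hardy-Littlewood-P\'olya inequality, any majorization $\sigma(A^*) \prec \sigma(A)$ yields $\sum_{s} L_{A^*}(s)^p \le \sum_{s} L_A(s)^p$ for every $p \ge 1$, and the case $p = \infty$ is immediate from $\sigma_1(A^*) \le \sigma_1(A)$.

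The total load $\sum_s L_A(s) = |C|$ is the same under every assignment, so only the partial-sum inequalities $\sum_{i \le k}\sigma_i(A^*) \le \sum_{i \le k}\sigma_i(A)$ need to be established for each $k$. Suppose this fails for some $A$ and some $k$, and let $T \subseteq S$ be a set of $k$ servers attaining the $k$ largest loads under $A^*$; then strictly more clients are assigned into $T$ by $A^*$ than by $A$. Viewing each assignment as a set of $C$-to-$S$ edges, I would decompose the ``symmetric difference'' $A^* \triangle A$ into alternating walks: at every client the two incident edges (one from $A^*$, one from $A$) force degree two and alternation, while at each server $s$ the excess $L_{A^*}(s)-L_A(s)$ counts the net number of alternating paths whose endpoint at $s$ lies on an $A^*$-edge minus those on an $A$-edge. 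Summing this excess over $T$ is strictly positive, so a counting argument produces an alternating path $P$ whose $A^*$-endpoint lies in $T$ and whose $A$-endpoint lies in $S \setminus T$. Flipping $A^*$ along $P$ shifts exactly one unit of load from some $s \in T$ to some $s' \notin T$ and leaves all other loads unchanged.

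\textbf{The main obstacle} is that this unit shift need not strictly decrease $\sigma(A^*)$ lexicographically: if $L_{A^*}(s) = L_{A^*}(s') + 1$, the sorted vector is merely permuted and we obtain no contradiction. Overcoming this is the technical heart of the proof, and I would attack it by choosing $k$ to be the \emph{smallest} violating index and pushing this minimality through the integrality of the loads: the deficit at index $k$ together with the non-deficit at indices $<k$ forces $\sigma_k(A^*) \ge \sigma_k(A) + 1$, so the server at the bottom of $T$ carries strictly more load than the corresponding $k$-th load in $A$. Exploiting this gap, one can refine the choice of $T$ (for instance, by including every server of load exactly $\sigma_k(A^*)$) or combine several alternating paths into a simultaneous multi-path flip, so that some endpoint $s' \notin T$ satisfies $L_{A^*}(s') \le L_{A^*}(s) - 2$. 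This produces the required lexicographic strict decrease, contradicting the choice of $A^*$ and completing the argument.
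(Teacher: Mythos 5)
The paper itself does not prove this lemma---it is imported from Alon et al.~\cite{AAWY97} (see also the semi-matching literature)---so your argument has to stand on its own, and its skeleton is indeed the standard one: lex-minimal sorted load vector, symmetric-difference decomposition into server-to-server alternating paths, and a counting argument producing a path that moves one unit of load out of a heavily-loaded set. All of that is sound. But the step you yourself flag as ``the technical heart'' is a genuine gap, and neither fix you sketch closes it. Enlarging $T$ to $T' = \{s : L_{A^*}(s) \ge \ell\}$ with $\ell = \sigma_k(A^*)$ does work up to a point: using $\sigma_i(A^*) = \ell \ge \sigma_i(A) + 1$ for $k \le i \le |T'|$ one gets $L_{A^*}(T') > L_A(T')$ and hence a path $P$ from some $s \in T'$ to some $s' \notin T'$, but this only guarantees $L_{A^*}(s) \ge \ell$ and $L_{A^*}(s') \le \ell - 1$, a gap of at least $1$ rather than $2$. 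The case $L_{A^*}(s) = \ell$, $L_{A^*}(s') = \ell - 1$ survives, and flipping $P$ then merely permutes the sorted vector, exactly the obstacle you started with. Multi-path flips do not rescue this: flipping two such paths leaving the same $s$ raises two servers from $\ell - 1$ to $\ell$, a lexicographic \emph{increase}. Note also that the claim genuinely requires the graph structure at this point---for an arbitrary finite set of nonnegative integer vectors with fixed sum, the lex-minimal element need not be majorized by the others (compare $(2,2,2,0)$ with $(3,1,1,1)$)---so this is precisely where all the content of the lemma lives and cannot be waved through.

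The standard way to finish is a secondary minimization. Fix the competitor $A$, and among all assignments whose sorted load vector equals the lexicographic minimum, choose $A^*$ to additionally minimize $\left|A^* \triangle A\right|$. Now run your argument with the enlarged $T'$. If the path $P$ has endpoint gap at least $2$, flipping strictly decreases the sorted vector lexicographically, contradicting the primary minimization. In the surviving case $L_{A^*}(s) = L_{A^*}(s') + 1$, flipping $P$ leaves the load multiset unchanged but reassigns every client on $P$ to its $A$-server, so every edge of $P$ disappears from the symmetric difference and $\left|A^* \triangle A\right|$ strictly decreases---contradicting the secondary minimization. With this one additional device (or an equivalent induction on $\left|A^* \triangle A\right|$), your majorization claim holds and the Hardy--Littlewood--P\'olya step correctly delivers all $\ell_p$-norms at once, with $p = \infty$ from the first coordinate.
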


\subparagraph{$b$-matchings.}
In addition to assignments, we will also work with \emph{$b$-matchings}.  For a
vertex capacity function $b : V \to \mathbb{Z}^+$, a $b$-matching $x : E
\to\mathbb{Z}^+$ gives to each edge a multiplicity such that $x(\delta(v)) \le
b(v)$ for every vertex $v$.

Since we will focus solely on the case when $G$ is bipartite and $V = C \cup
S$, it will be convenient to split $b$ into two separate capacity functions,
one for the clients and one for the servers.  We use $\kappa : C \to
\mathbb{Z}^+$ to denote the \emph{client capacities} and $\tau : S \to
\mathbb{Z}^+$ to denote the \emph{server capacities}.  A \emph{$(\kappa,
\tau)$-matching} is then a function $x : E \to \mathbb{Z}^+$ assigning
multiplicities to edges such that
\begin{equation}
	\label{eq:bmatching}
	\sum_{s \in N(c)} x(cs) \le \kappa(c)
\end{equation}
for every client $c$ and
\begin{equation*}
	\sum_{c \in N(s)} x(cs) \le \tau(s)
\end{equation*}
for every server $s$.
A $(\kappa, \tau)$-matching is \emph{client-perfect} if \eqref{eq:bmatching}
holds with equality for all $c \in C$.  We say that a server $s$ (resp.\ client
$c$) is \emph{$x$-saturated} if $x(\delta(s)) = \tau(s)$ (resp.\ $x(\delta(c))
= \kappa(c)$).  If a vertex is not $x$-saturated, then it is
\emph{$x$-unsaturated}.  An \emph{$x$-augmenting path} is a path $v_1, \dots,
v_{2k}$ such that $v_1$ and $v_{2k}$ are $x$-unsaturated and $x(v_{2i}
v_{2i+1}) > 0$ for all $i \in [k - 1]$.

We will make repeated use of the following simple remark.
\begin{remark}
  When all client weights are one (the unweighted case), a client-perfect $(1,
  \tau)$-matching induces an assignment of maximum load at most $\max_{s \in S}
  \tau(s)$, and vice versa.
\end{remark}
Note that the remark does not generalize to weighted clients; under a $(w,
\tau)$-matching, a client may be split across multiple servers, which does not
correspond to a proper assignment.

\subparagraph{The \local{} and \congest{} models.}
In both the \local{} and the \congest{} models of distributed computation, each
vertex of the input graph hosts a processor that initially only knows its
neighbors and its weight.  Following a standard assumption, we assume that all
vertices know $n$ and the maximum weight $W$.  Computation proceeds in
synchronous rounds; in each round, vertices may send messages to their
neighbors and then receive messages from their neighbors in lockstep.  Local
computation is free---the performance measure of interest is the \emph{round
complexity}, the number of rounds the algorithm takes to complete.

The \local{} and \congest{} models differ in that in the \local{} model,
vertices can send and receive arbitrarily large messages, while in the
\congest{} model, the communication between adjacent vertices in each round is
capped at $O(\log{n})$.

\section{A Structural Lemma}

A crucial component of our results is a structural observation about
approximate $(\kappa, \tau)$-matchings in the context of the load balancing
problem, which is inspired by results from online load
balancing~\cite{GuptaKS14,BernsteinHR19}: if a graph contains \textit{some}
client-perfect $(\kappa, \tau)$-matching, then \textit{every} $(\kappa,
2\tau)$-matching is either client-perfect or can be augmented via an augmenting
path of logarithmic length.  Formally, and more generally, we have the
following lemma.

\begin{lemma}\label{lem:expansion}
  Suppose $G$ contains a client-perfect $(\kappa, \tau)$-matching and let $x$
  be a $(\kappa, \alpha \tau)$-matching for some $\alpha > 1$. If a client $c$
  is $x$-unsaturated, then there is an $x$-augmenting path starting from $c$ of
  length at most $2\lceil \log_{\alpha}\!\tau(S)\rceil + 1$.
\end{lemma}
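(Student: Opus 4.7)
The plan is a layered BFS argument from $c$, exploiting the gap between the $(\kappa,\tau)$-matching $y$ (the hypothetical client-perfect one) and the given $(\kappa,\alpha\tau)$-matching $x$.

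First I would set up the BFS. Starting from $U_0=\{c\}$, inductively define for $i\ge 0$
\[
  T_i = T_{i-1}\cup\{s\notin T_{i-1}: \exists c'\in U_i,\, y(c's)>x(c's)\},
\]
\[
  U_{i+1}=U_i\cup\{c'\notin U_i:\exists s\in T_i,\, x(sc')>0\},
\]
with $T_{-1}=\emptyset$. Since $c$ is $x$-unsaturated, $x(\delta(c))<\kappa(c)=y(\delta(c))$, so some server $s$ has $y(cs)>x(cs)$; hence $T_0\neq\emptyset$ and $\tau(T_0)\ge 1$. In this BFS, any server in $T_i$ has a $(c$-rooted$)$ alternating path of length $2i+1$ whose odd edges have $y>x$ and whose even edges have $x>0$; in particular, if some $s^\star\in T_{i^\star}$ is $x$-unsaturated, then the path to $s^\star$ is an $x$-augmenting path of length $2i^\star+1$.

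The heart of the argument is a doubling claim: if no server in $T_0,\ldots,T_{i-1}$ is $x$-unsaturated, then $\tau(T_i)\ge\alpha\,\tau(T_{i-1})$. The proof is a mass-counting argument on $y$ versus $x$. Because every $s\in T_{i-1}$ is $x$-saturated, $\sum_{s\in T_{i-1}}x(\delta(s))=\alpha\tau(T_{i-1})$, and by construction every $x$-edge incident to $T_{i-1}$ has its other endpoint in $U_i$, so $x(U_i,T_{i-1})=\alpha\tau(T_{i-1})$. On the other hand, $y$ is client-perfect, so $y(\delta(U_i))=\kappa(U_i)$ and also $x(\delta(U_i))\le\kappa(U_i)$. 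For any server $s\notin T_i$ we have $s\notin T_{i-1}$ and (by the failure of $s$ to enter $T_i$) $y(c's)\le x(c's)$ for every $c'\in U_i$; summing,
\[
  y(U_i,\overline{T_i})\le x(U_i,\overline{T_i})\le x(\delta(U_i))-x(U_i,T_{i-1})\le\kappa(U_i)-\alpha\tau(T_{i-1}).
\]
Therefore $y(U_i,T_i)=\kappa(U_i)-y(U_i,\overline{T_i})\ge\alpha\tau(T_{i-1})$, and since $y$ is a $(\kappa,\tau)$-matching, $y(U_i,T_i)\le\tau(T_i)$. This yields $\tau(T_i)\ge\alpha\tau(T_{i-1})$.

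Finally, let $i^\star$ be the least index with some $x$-unsaturated server in $T_{i^\star}$ (which exists, else the recursion forces $\tau(T_i)\ge\alpha^i\to\infty$, contradicting $\tau(T_i)\le\tau(S)$). For every $i\le i^\star$ the hypothesis of the doubling claim holds at step $i$, so iterating gives $\tau(T_{i^\star})\ge\alpha^{i^\star}\tau(T_0)\ge\alpha^{i^\star}$. Combined with $\tau(T_{i^\star})\le\tau(S)$, this forces $i^\star\le\log_\alpha\tau(S)$, and hence the augmenting path extracted from the BFS tree has length at most $2i^\star+1\le 2\lceil\log_\alpha\tau(S)\rceil+1$, as claimed. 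The main obstacle is the mass-accounting step above; once the roles of $y$ and $x$ on the boundary of $U_i$ are set up correctly, the rest is bookkeeping.
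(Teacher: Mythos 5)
Your proof is correct and follows essentially the same route as the paper's: a layered expansion argument from $c$ in which, as long as every reached server is $x$-saturated, the total server capacity of the reachable set grows by a factor of $\alpha$ per layer, forcing an $x$-unsaturated server (and hence a short augmenting path) within $\lceil\log_\alpha \tau(S)\rceil$ layers. The only difference is cosmetic bookkeeping: the paper counts degrees in a directed multigraph whose forward arcs are the support of the client-perfect matching, whereas you use cut-mass inequalities with the residual condition $y > x$ on forward edges.
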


\begin{proof}
Suppose $G$ contains a client-perfect $(\kappa, \tau)$-matching $x^*$.  To
simplify the discussion, we define a directed multigraph $D$ on $V(G)$ whose
arcs are oriented edges in the support of $x$ and $x^*$ as follows.  For every
$cs \in E(G)$ with $c \in C$ and $s \in S$, $D$ has $x^*(cs)$ copies of the arc
$(c, s)$, $x(cs)$ copies of the arc $(s, c)$, and no other arcs.  Notice that
every directed path in $D$ starting from $c$ and ending at an $x$-unsaturated
server corresponds to an $x$-augmenting path starting from $c$ in $G$.

Let $k \in \mathbb{N}$ and define $U_k$ to be the set of vertices
reachable via a walk of length $k$ from $c$ in $D$.  Call $U_k$ \emph{full} if
$u$ is $x$-saturated for all $u \in U_k$.

The lemma follows from two simple claims:
\begin{enumerate}
\item if $U_{2k+1}$ is not full, then $G$ contains an $x$-augmenting path from $c$ of
  length at most $2k+1$; and
\item if $U_{2k+1}$ is full, then $\tau(U_{2k+3}) \ge \alpha \tau(U_{2k+1})$.
\end{enumerate}

The first claim follows from the fact that a directed walk contains a directed
path with the same endpoints and from the correspondence noted earlier between
directed paths from $c$ ending at an unsaturated server in $D$ and
$x$-augmenting paths from $c$ in $G$.

We proceed to the second claim.  If $s \in U_{2k+1}$ and $U_{2k+1}$ is full,
then $s$ is $x$-saturated and the out-degree of $s$ is $\alpha\tau(s)$.  Thus,
the total out-degree of $U_{2k+1}$---and also the total in-degree of
$U_{2k+2}$---is $\alpha\tau(U_{2k+1})$.  Now we use the fact that the
out-degree of a client $c \in U_{2k+2}$ is at least as large as its in-degree.
This follows simply from the fact that the in-degree must be at most
$\kappa(c)$, and since $x^*$ is client-perfect, the out-degree is exactly
$\kappa(c)$.  Following the arcs once more, the total in-degree of $U_{2k+3}$
is at least $\alpha\tau(U_{2k+1})$.  Finally, since the in-degree of $U_{2k+3}$
is also point-wise less than $\tau$, we have $\alpha\tau(U_{2k+1}) \le
\tau(U_{2k+3})$.

Now we show how the two claims together imply the lemma.  If $U_{2i+1}$ is not
full for some $i \le \lceil \log_\alpha(\tau(S))\rceil$, we are done by the
first claim.  Otherwise, the sums of capacities grow exponentially starting
with $\tau(U_1) \ge 1$.  Inductively, for $k = \lceil \log_\alpha \tau(S) \rceil + 1$,
we have $\tau(U_{2k+1}) \ge \alpha^k > \tau(S)$, a contradiction.  Thus, not
all $\{U_{2i+1}\}$ are full for $i \le \lceil \log_\alpha(\tau(S))\rceil$ and
so we again apply the first claim.
\end{proof}

\section{Unweighted Load Balancing}

\label{sec:unweighted}
Assuming an algorithm to eliminate augmenting paths up to a certain length
efficiently, the structural lemma from the previous section almost immediately
implies an algorithm for the unweighted load balancing problem.  To eliminate
short augmenting paths we use the following lemma, which is implied by Lemma 24
in~\cite{HKPR18}.

\begin{lemma}[\cite{HKPR18}]\label{lem:hksr}
There exists an $O(k^3\log{n})$-round randomized algorithm in the \congest{}
model that, given a graph $G = (C \cup S, E)$, a positive integer $k$, and
server capacity function $\tau$, computes with high probability a $(1,
\tau)$-matching with no augmenting paths of length $k$ or less.
\end{lemma}

The proof of \Cref{lem:hksr} combines two existing results. The algorithm
of Lotker~et~al.~\cite{LPP15} computes a (1,1)-matching with no augmenting
paths of length $\leq k$ in $O(k^3\log{n})$ rounds.
Halld{\'o}rsson~et~al.~\cite{HKPR18} then show a black-box extension from
(1,1)-matching to (1, $\tau$)-matching which does not increase the
round-complexity; see~\cite{HKPR18} for more details.

\begin{remark}
Both our algorithm and the algorithm of~\cite{HKPR18} use the above lemma as a
starting point, but the algorithm of~\cite{HKPR18} only removes short
augmenting paths to ensure that the $(1,\tau)$-matching is approximately
optimal. Since a near-optimal matching is still not an assignment (as it is not
client-perfect), they then use a different set of tools to convert an
approximate $(1,\tau)$-matching to an $O(\log{n}/\log\log{n})$-approximate
assignment.

Our analysis, by contrast, directly exploits the non-existence of short
augmenting paths via \Cref{lem:expansion}. We thus avoid the additional
conversion of~\cite{HKPR18}, which leads to a better approximation ratio, as
well as a simpler algorithm. 
\end{remark}

\subparagraph{Approximating the $\ell_\infty$-norm (the min-max load balancing problem).}

Let $B^*$ be the optimum $\ell_\infty$-norm.  We will first describe an
algorithm that assumes as input some $B \ge B^*$.  The algorithm begins by
using \Cref{lem:hksr} to compute a $(1, 2B)$-matching $x$ with no
augmenting paths of length $4\lceil\log_2{n}\rceil+1$.  The sum of the server
capacities is at most $nB$, and since clients have unitary weight, we can
assume $B \le n$.  Therefore, by \Cref{lem:expansion}, since there are no
$x$-augmenting paths of length $2\lceil \log_2(nB)\rceil+1 \le 4 \lceil
\log_2{n}\rceil + 1$, we know that $x$ is necessarily client-perfect.  A client
$c$ can now assign itself to the vertex it is matched to under $x$.

To remove the assumption that we are given a $B \ge B^*$, we run the algorithm
above $\log{n}$ times with $B = 1, 2, 4, \dots, n$.  For every run where $B \ge
B^*$, the algorithm will successfully assign every client. Note, however, that
in the distributed setting, there is no efficient way for the clients to
determine the smallest $B$ for which the algorithm successfully matched every
client.  Instead, each client $c$ \textit{locally} assigns itself according to
the run with smallest $B$ that succeeded---i.e., according to the first run in
which $c$ was matched. We show that the resulting assignment has maximum load
at most $8B^*$. See \Cref{alg:congest_unweighted} for a concise
treatment.
 
\begin{figure}[h]
\begin{algorithm}[H]
  \caption{\label{alg:congest_unweighted} Approximate unweighted load balancing in the \congest{} model.}
  \For{$B \in \{1, 2, 4, \dots, n\}$}{
 	compute a $(1, 2B)$-matching $x_B$ with no augmenting paths of length $4\lceil\log{n}\rceil + 1$\label{line:unweighted-augment}
  }
  each client $c$ locally finds the minimum $B$ such that $c$ is matched in $x_B$ and assigns itself to the server it is matched to in $x_B$\label{line:unweighted-assign}\\
\end{algorithm}
\end{figure}

\begin{theorem}\label{thm:hkpr}
  In the \congest{} model, there is a randomized algorithm
  (\Cref{alg:congest_unweighted}) that with high probability computes an
  $O(1)$-approximation to the min-max load balancing problem in
  $O(\log^5\!{n})$ rounds.
\end{theorem}

\begin{proof}
  First observe that since augmenting paths with respect to a $(1, n)$-matching
  have length at most 1,  all clients are assigned in $x_n$.  Therefore the
  algorithm always outputs an assignment of all clients.

  Let $B^*$ be the load of an optimal assignment and let $B$ be the unique
  power of two such that $B^* \le B < 2B^*$.  The $(1, 2B)$-matching $x_B$
  computed in the main loop of \Cref{alg:congest_unweighted} will be
  client-perfect by \Cref{lem:expansion}, and so no client will assign
  itself according to $x_{B'}$ for $B' > B$.  A single server $s$ is only
  assigned at most $2i$ clients from $x_i$, and since $x_1, x_2, x_4, \dots,
  x_B$ are the only assignments contributing to the load of $s$, the total load
  of $s$ is at most $2 + 4 + 8 + \cdots + 2B < 4B \leq 8B^*$.  Thus every
  server has load at most $8B^*$.

  Finally, each $x_B$ is computed in $O(\log^4\!{n})$ rounds with high
  probability by \Cref{lem:hksr}.  We compute them sequentially, resulting in
  total round complexity of $O(\log^5\!{n})$.
\end{proof}

\begin{remark}\label{rem:alg1-local}
In the \local{} model, the round complexity of
\Cref{alg:congest_unweighted} is $O(\log^3\!{n})$.  One $\log$ factor is shaved
off of \Cref{lem:hksr} because the algorithm of~\cite{LPP15} for finding a
(1,1)-matching is faster in the \local{} model. The second $\log$ factor is
shaved off by running the for-loop of \Cref{alg:congest_unweighted} in parallel
for every $B$.
\end{remark}

\subparagraph{Approximating all $\boldsymbol{\ell_p}$-norms simultaneously (the all-norm load balancing problem).}
The assignment produced by \Cref{alg:congest_unweighted} in fact does
more than approximate the optimal $\ell_\infty$-norm; it also simultaneously
approximates every $\ell_p$-norm for $p \geq 1$, as we will now show.

Recall that by \Cref{lem:optimal-exists}, when clients are unweighted,
there is an {all-norm optimal assignment} that simultaneously minimizes the
$\ell_p$-norm for all $p \ge 1$, including $p=\infty$. 
Let $A^*$ be the lexicographically smallest all-norm optimal solution (the
choice is arbitrary; we just need a canonical optimal solution). We will need
the following key definition.
\begin{definition}
  The \emph{level} $\ell(s)$ of a server $s$ is defined as the load of $s$ in
  $A^*$, i.e., $\ell(s) := L_{A^*}(s)$ for all $s \in S$.  We define the
  \emph{level} $\ell(c)$ of a client $c$ as $\ell(c) := L_{A^*}(A^*(c))$, the load
  of the server to which $c$ is assigned in $A^*$.
\end{definition}

To take advantage of some additional structure, we will also need the following
definition. 

\begin{definition}
  A \emph{cost-reducing path} with respect to an assignment $A$ is a path $v_1
  v_2 \cdots v_{2k+1}$ such that
  \begin{enumerate}
    \item $v_1, v_k \in S$;
    \item $A(v_{2i}) = v_{2i-1}$ for all $i \in [k]$; and
    \item $L_A(v_k) \le L_A(v_1) - 2$.
  \end{enumerate}
\end{definition}
As the name implies, the existence of a cost-reducing path with respect to an
assignment $A$ implies that $A$ is not all-norm optimal, as re-assigning
$v_{2i}$ to $v_{2i+1}$ for all $i \in [k]$ produces an assignment with smaller,
e.g., $\ell_2$-norm.

\begin{lemma}[\cite{AAWY97, HLLT06, BKPPS17}]\label{lem:cost-reducing}
  There is no cost-reducing path with respect to the assignment $A^*$. \qed
\end{lemma}

We obtain the following immediate corollary.

\begin{corollary}\label{cor:nodown}
  Let $c$ be a client and let $s$ be a server. If $\ell(s) \le \ell(c) - 2$,
  then $c$ and $s$ are not adjacent.
\end{corollary}
\begin{proof}
  If there did exist an edge $cs$, then $\{A^*(c), c, s\}$ would form a cost-reducing
  path with respect to $A^*$.
\end{proof}

\begin{lemma}\label{lem:sat-opt}
  Let $B$ be a positive integer and suppose $x$ is a $(1, 2B)$-matching
  such that $G$ contains no $x$-augmenting paths of length $4\lceil
  \log{n}\rceil + 1$. If $c \in C$ and $\ell(c) \le B - 1$, then $c$ is saturated
  by $x$.
\end{lemma}

\begin{proof}
  We consider the subgraph $G'$ of $G$ defined as follows. The vertices of $G'$ are
  all clients and servers whose level is at most $B$. That is, \[
    V(G') = \{ u \in V : \ell(v) \le B\}
  \]
  The edges of $G'$ consist only of the edges in the support of $x$ and $A^*$: \[
    E(G') = \{ cs : \text{$x(cs) = 1$ or $A^*(c) = s$}\}.
  \]
  Let $n'$ be the number of vertices in $G'$ and let $x'$ be the restriction of
  $x$ to $G'$.

  Now let $c$ be a client of level at most $B - 1$ and suppose that $c$ is not
  $x$-saturated; as $x'$ is the restriction of $x$, clearly $c$ is not
  $x'$-saturated either. Since $G'$ contains a $(1, B)$-matching (namely, the
  restriction of $A^*$) and since $x'$ is a $(1, 2B)$-matching, by
  \Cref{lem:expansion} there exists an $x'$-augmenting path $P$ from $c$
  of length at most $4 \lceil \log{n'} \rceil + 1$ in $G'$. The graph $G$
  contains no such augmenting paths by assumption, and so the path $P$ ends at
  an $x$-saturated server $s$.

  Because $s$ is $x$-saturated but $x'$-unsaturated, there must be a client $d$
  of level at least $B + 1$ adjacent to $s$. However, notice that the path $P$
  then gives a cost-reducing path with respect to $A^*$. In particular, the
  path formed by starting with the vertices $A^*(d), d, s$ and continuing along
  $P$ to $A^*(c)$ gives a cost-reducing path with respect to $A^*$, 
  contradicting \Cref{lem:cost-reducing}.  Thus $c$ is $x$-saturated.
\end{proof}

\begin{theorem}\label{thm:unweighted-pnorm}
  In the \congest{} model, there is a randomized algorithm
  (\Cref{alg:congest_unweighted}) that with high probability computes
  an $O(1)$-approximation to the all-norm load balancing problem in $O(\log^5\!{n})$
  rounds.
\end{theorem}

\begin{proof}
  We have already shown that \Cref{alg:congest_unweighted} produces an
  assignment (\Cref{thm:hkpr}). It remains to show that the $\ell_p$-norm of
  this assignment is good.

  Fix a server $s$. \Cref{cor:nodown} implies that the load of $s$ is
  only determined by clients whose level is at most $\ell(s) + 1$ as other
  clients are not adjacent to $s$. Let $\hat{B}$ be the unique power of two such
  that \begin{equation}\label{eq:ptwo}
    \ell(s) + 2 \le \hat{B} < 2(\ell(s) + 2).
  \end{equation}
  Notice that by \Cref{lem:sat-opt}, every client of level at most $\ell(s) +
  1$ will be saturated in $x_{\hat{B}}$, and so every client adjacent to $s$
  will choose its server in $x_B$ for some $B \le \hat{B}$. Thus, we bound the
  load of $s$ as $L_A(s) \le 2 + 4 + 8 + \cdots + 2\hat{B} \le 24 \ell(s)$,
  where we have used that $\hat{B} < 6\ell(s)$ by~\eqref{eq:ptwo}.  Hence,
  \[
    \norm{A}_p = {\left(\sum_{s \in S} {(L_A(s))}^{p}\right)}^{1/p} \le
  {\left(\sum_{s \in S} {(24\ell(s))}^p\right)}^{1/p} = 24\norm{A^*}_p.\qedhere
\]
\end{proof}

\section{Weighted Load Balancing}\label{sec:weighted}

\newcommand{\afrac}{y_f}
\newcommand{\afracopt}{\afrac^*}

\SetKwComment{Comment}{$\triangleright$\ }{}
\newcommand\commentfont[1]{\footnotesize\color{darkgray}\textit{#1}}
\SetCommentSty{commentfont}

In this section, we describe our algorithms for the weighted load balancing
problem.  We start by showing that with the simple reduction in~\cite{HKPR18}
from unweighted to weighted load balancing, our unweighted algorithm
(\Cref{alg:congest_unweighted}) also implies an
$O(\log{n})$-approximate $\polylog(n)$-round \congest{} algorithm for weighted
instances.  We then turn to the main result of this section: an
$O(1)$-approximate $\polylog(n)$-round \local{} algorithm for the weighted load
balancing problem.  We conclude this section with an $O(1)$-approximate
sequential algorithm that runs in near-linear time.

As our goals in this section are to obtain, at best, an $O(1)$-approximation,
we may assume that all client weights are powers of two.  If not, rounding
weights up to the nearest power of two will at most double the approximation
ratio.  We can assume similarly that the maximum weight $W \le n$.  Indeed,
clients with load less than $W / n$ can collectively distribute at most $W$
weight across the servers and can therefore be assigned arbitrarily.  Thus,
when $W > n$, clients can simply rescale their own weight by $n / W$ (and round
it up to the nearest integer).

Throughout this section, we denote by $C_i$ the set of clients whose weight is
exactly $2^i$.  (By our previous assumption, the sets $\{C_i\}$ partition $C$.)
We let $G_i := G[C_i, N(C_i)]$ be the induced graph on $C_i$ and its
neighborhood.

\subparagraph{An $O(\log{n})$-approximation in the \congest{} model.}
We begin with an easy corollary of our unweighted algorithm following a simple
reduction in~\cite{HKPR18}.
\begin{theorem}\label{thm:weighted-reduction}
  In the \congest{} model, an $O(\log{n})$-approximation to the all-norm
  weighted load balancing problem can be computed with high probability in
  $O(\log^5\!{n})$ rounds.
\end{theorem}
\begin{proof}
  Consider the following algorithm: For each weight class $i$, compute an
  assignment $A_i$ of $G_i$ using \Cref{alg:congest_unweighted} by
  treating all clients as having weight 1. Then, have each client in $C_i$
  assign itself according to $A_i$.

  Since all $A_i$'s can be computed in parallel (as the graphs $G_i$ are
  edge-disjoint, only one of the parallel copies need to communicate over an
  edge), the algorithm  runs in $O(\log^5\!{n})$ rounds. We now show that the
  resulting assignment $A$ is $O(\log{n})$-approximate for all norms.

  Fix any $p \ge 1$ including $p = \infty$; let $A^*$ be an assignment for $G$
  with minimum $\ell_p$-norm, and let $A^*_i$ be an assignment for $G_i$ with
  minimum $\ell_p$-norm.  Clearly $\norm{A^*_i}_p \le \norm{A^*}_p$ for all
  $i$.  By \Cref{thm:unweighted-pnorm}, there is a constant $K$ such that
  $\norm{A_i}_p \le K \norm{A^*_i}_p \le K \norm{A^*}_p$.  It follows that \[
    \norm{A}_p = \norm{A_1 + \cdots + A_{\log{n}}}_p \le \norm{A_1}_p + \cdots + \norm{A_{\log{n}}}_p \le K \log(n) \norm{A^*}_p.\qedhere
  \]
\end{proof}

\subparagraph{Preliminaries for the weighted algorithms.}
Though they use entirely different techniques, the \local{} and sequential
algorithms of the next two subsections both follow the same high-level
approach: first compute a split assignment, then round it into an integral one.

\begin{definition}
  Let $G = (C \cup S, E)$ be a bipartite graph with client weights $w : C \to
  \mathbb{Z}^+$. A \emph{split assignment} $\afrac$ in $G$ is a
  client-perfect $(w,\infty)$-matching (so servers have unbounded capacity).
  For every server $s$, the \emph{load} $L_{\afrac}(s)$ is the sum of
  edge-multiplicities incident to $s$.
\end{definition}

Notice that split assignments are a relaxation of standard assignments by
allowing clients to be assigned to several different servers at once,
contributing an integral load to each server, provided that the total load
distributed by the client does not exceed its weight.

We will also need the following notion.  Define the \emph{client-expanded
graph} $\widetilde{G}$ of $G$ as the graph formed by making $w(c)$ copies of
each client $c$.  Formally, for each $c \in C$, the client-expanded graph has
vertices $c_1, \dots, c_{w(c)}$ and an edge between $c_i$ and $s$ for all $i
\in [w(c)]$ if and only if $G$ has an edge between $c$ and $s$. 

\begin{observation}\label{obs:frac}
  A split assignment $\afrac$ in $G$ corresponds to an integral assignment
  in the client-expanded graph $\tilde{G}$ with the same server loads. Thus,
  since $\tilde{G}$ is unweighted, by \Cref{lem:optimal-exists} there
  exists an all-norm optimal split assignment $\afracopt$.
\end{observation}

\subsection{An $O(1)$-approximation in the \local{} model}

Our main result in this section is the following theorem.

\begin{theorem}\label{thm:pnorm-weighted}
  In the \local{} model, there is a randomized algorithm
  (\Cref{alg:local_weighted}) that with high probability computes an
  $O(1)$-approximation to the weighted all-norm load balancing problem in
  $O(\log^3\!{n})$ rounds. 
\end{theorem}

We will need the next rounding lemma to describe our algorithm; the proof is
standard.

\begin{lemma}\label{lem:cyclecancelling}
  If $G = (C \cup S, E)$ contains a client-perfect $(\kappa, \tau)$-matching
  $x$, then there exists an assignment $A : C \to S$ such that for all servers
  $s \in S$,
  \begin{align*}
    L_A(s) \le \tau(s) + \max_{c \in A^{-1}(s)} \kappa(c).
	\end{align*}
\end{lemma}

\begin{proof}
Consider the set of edges $F$ in the support of $x$.  If $C \subseteq F$ is a
cycle, we can alternately increase and decrease the value of $x(e)$ on each
edge $e$ of the cycle by $\min_{f\in C} x(f)$ to break the cycle without
changing $x(\delta(v))$ for any $v \in V$ (this cycle can only be of even
length as the input graph is bipartite).  Thus, we may assume that the support
of $x$ has no cycles and thus is a forest. 

We can next turn $F$ into a collection of stars centered on servers. This done
by rooting each tree $T$ in the support of $F$ arbitrarily, picking each server
$s$ which has a client parent-node $c$, and setting the edge $x(cs) =
\kappa(c)$ and $x(cs') = 0$ for all other $s' \in N(c)$. This clearly satisfies
the requirement of client $c$ and the load on server $s$ can only ever be
increased by $\max_{c \in A^{-1}(s)} \kappa(c)$ as each server can only have
one parent client. At this point, in $F$, any client is assigned to exactly one
server and thus we obtain an integral solution in which the load of any server
$s$ is at most $\tau(s) + \max_{c \in A^{-1}(s)} \kappa(c)$, finalizing the
proof. 
\end{proof}

Our \local{} algorithm consists of two main parts, an algorithm for solving the
split load balancing problem and a rounding procedure, which we describe
now in turn.

\subparagraph{Computing a split assignment.} 

The first step of the \local{} algorithm is to compute an assignment
$\tilde{A}$ in the client-expanded graph $\widetilde{G}$ of $G$ using
\Cref{alg:congest_unweighted}.  Note that in the \local{} model, each
client $c$ can simulate all ``new'' clients $c_1,\ldots,c_{w(c)}$ in
\Cref{alg:congest_unweighted} without any overhead in the round
complexity.\footnote{We remark that computing this assignment is the only step
of our weighted algorithm that does not run efficiently in the \congest{}
model, precisely because this simulation not possible in the \congest{}
model in $\polylog(n)$ rounds.} As mentioned in \Cref{obs:frac}, the
assignment $\tilde{A}$ corresponds to a split assignment with the same
server loads.  To limit the amount of notation in the algorithm description, we
will sometimes refer to $\tilde{A}$ as a split assignment in $G$, although
formally it is an assignment in $\tilde{G}$. 

The guarantees of \Cref{alg:congest_unweighted} tell us that
$\tilde{A}$ has small $\ell_p$-norm.  The next step is to use $\tilde{A}$ to
find an integral assignment without much loss in the norm.

\subparagraph{A ``rounding'' procedure.} 
We would now ideally round the split assignment $\tilde{A}$ into an
integral assignment, but even in the \local{} model we cannot afford to run
such a procedure directly.  The fact that a good rounding exists, however, is
enough for us to apply \Cref{lem:expansion} to obtain a similarly good
assignment, as we show below.

For each $i$, let $\tilde{A}_i$ be $\tilde{A}$ restricted to $G_i$.
\Cref{lem:cyclecancelling} states that there is a way to round
$\tilde{A}_i$ into an assignment with load $\tau_i(s) = L_{\tilde{A}_i}^i(s) +
2^i$ for servers $s$ assigned to by $\tilde{A}_i$ and $\tau_i(s) = 0$ for the
remaining servers.  Treating the clients as unweighted, $\tilde{A}_i$
corresponds to a $(1, \lceil 2^{-i} \tau_i\rceil)$-matching.  We now compute a
$(1, 2 \lceil 2^{-i}\tau_i \rceil)$-matching $x_i$ with no augmenting paths of
length $4\lceil \log{n} \rceil + 1$ or smaller.  By \Cref{lem:expansion},
each $x_i$ is client-perfect, inducing an (integral) assignment $A_i$ in $G_i$.
Lastly, each client in $C_i$ assigns itself in accordance with $A_i$ to produce
the global assignment $A$.  See \Cref{alg:local_weighted}.

\begin{figure}[t]
  \begin{algorithm}[H]
    \caption{\label{alg:local_weighted} Approximate weighted (all-norm) load balancing in the \local{} model.}
    emulate \Cref{alg:congest_unweighted} on $\widetilde{G}$ to compute an assignment $\tilde{A}$\\
    \For{$i \in \{1, 2, 4, \dots, n\}$ in parallel}{
      let $\tilde{A}_i$ be $\tilde{A}$ restricted to $\widetilde{G}_i$\\
      let $\tau_i(s) = \begin{cases}
        L_{\tilde{A}_i}(s) + 2^i, &\text{if $L_{\tilde{A}_i}(s) > 0$}\\
        0, &\text{otherwise}\\
      \end{cases}$\nllabel{lin:tau}\\
      \Comment{by \Cref{lem:cyclecancelling}, an assignment with load vector point-wise less than $\tau_i$ exists in $G_i$}
      \Comment{therefore, scaling clients in $G_i$ to weight 1, a $(1, \lceil 2^{-i}\tau_i\rceil)$-matching exists}
      treating $G_i$ as unweighted, compute a $(1, 2 \lceil 2^{-i}\tau_i\rceil)$-matching $x_i$ in $G_i$ with no augmenting paths of length $4\lceil \log{n} \rceil + 1$\nllabel{lin:compute-matching}\\
      \Comment{by \Cref{lem:expansion}, $x_i$ is client-perfect}
      let $A_i$ be the assignment induced by $x_i$\nllabel{lin:assignment}\\
      assign each $c \in C_i$ to $A_i(c)$
    }
  \end{algorithm}
\end{figure}

To formalize the logic of the algorithm, we make a few claims that together
will imply the algorithm's correctness. The first claim ensures that the
algorithm produces a proper assignment.

\begin{claim}
  \Cref{alg:local_weighted} assigns every client to some server.
\end{claim}
\begin{proof}
  We need to show that the matching $x_i$ computed in
  \Cref{lin:compute-matching} of \Cref{alg:local_weighted} is
  client-perfect.  Consider $\tau_i$ from \Cref{lin:tau} of
  \Cref{alg:local_weighted}.  Viewing $\tilde{A}_i$ as a
  client-perfect $(w, L_{\tilde{A}_i})$-matching,
  \Cref{lem:cyclecancelling} guarantees that there is an assignment
  wherein each server $s$ has load at most $\tau_i(s)$.
  
  Because all clients in $G_i$ have the same weight, we can interpret the
  assignment from \Cref{lem:cyclecancelling} as a client-perfect $(1,
  \lceil 2^{-i} \tau_i\rceil)$-matching in the unweighted graph $G_i$.  When
  treating clients as unweighted, server capacities are always bounded by $n$,
  and so by \Cref{lem:expansion}, if $x_i$ has no augmenting paths of
  length $\le 4\lceil \log{n}\rceil + 1$, it follows that $x_i$ is
  client-perfect.
\end{proof}

The next claim shows that the assignment produced is $O(1)$-approximate.

\begin{claim}
  There is a universal constant $C$ such that for all $p \ge 1$, including $p =
  \infty$, the assignment $A$ produced by \Cref{alg:local_weighted}
  satisfies $\norm{A}_p \le C \norm{A^*}_p$, where $A^*$ is an
  $\ell_p$-norm-minimizing assignment.
\end{claim}

\begin{proof}
  Fix $p \ge 1$ (including $p=\infty$).  Let $A^*$ and $\tilde{A}^*$ be
  assignments for $G$ and $\tilde{G}$, respectively, that minimize the
  $\ell_p$-norm.  For brevity, we omit the subscript $p$ when writing norms
  with the understanding that all norms in this proof are $\ell_p$-norms.
  We will also treat the client weight function $w$ as a vector so that we can
  write its norm as $\norm{w}$.

  Our strategy is to decompose the final assignment $A$ into two parts and
  bound the norms of those parts separately.  First, we decompose each
  assignment $A_i$ of \Cref{lin:assignment}.  We define the first part,
  $\rho_i$, by $\rho_i(s) = 2^i$ if $s$ is assigned to by $A_i$ and $\rho_i(s)
  = 0$ otherwise.  In other words, $\rho_i$ has the same support as the load
  vector $L_{A_i}$ of $A_i$, but all of its nonzero entries are $2^i$.  The
  second part, $\mu_i$, docks $2^{i+1}$ from the support of $L_{A_i}$: $\mu_i =
  L_{A_i} - 2\rho_i$.  Letting $\mu = \sum_i \mu_i$ and $\rho = \sum_i \rho_i$,
  we have that $\norm{A} = \norm{\mu + 2\rho} \le \norm{\mu} + 2\norm{\rho}$.
  It therefore suffices to show that $\norm{\mu}$ and $\norm{\rho}$ both
  $O(1)$-approximate $\norm{A^*}$.

  Let us first bound $\norm{\mu}$.  For any server $s$ assigned to by $A_i$, we
  have \begin{align*}
    \mu_i(s) &= L_{A_i}(s) - 2^{i+1}\\
      &\le 2^{i+1}\lceil 2^{-i} \tau_i(s)\rceil - 2^{i+1}\\
      &\le 2^{i+1}\lceil 2^{-i} L_{\tilde{A}_i}(s) + 1\rceil - 2^{i+1}\\
      &\le 2^{i+1} 2^{-i+1} L_{\tilde{A}_i}(s) + 2^{i+1} - 2^{i+1} \tag{$\lceil x\rceil \le 2x$ for all $x \ge 1$} \\
      &\le 4 L_{\tilde{A}_i}(s).
  \end{align*}
  For any server $s$ \textit{not} assigned to by $A_i$ we have $\mu_i(s) = 0$,
  and so trivially $\mu_i(s) \le 4 L_{\tilde{A}_i(s)}$ for such $s$.
  Therefore, $\mu(s) = \sum_i \mu_i(s) \le \sum_i 4L_{\tilde{A}_i}(s) =
  4L_{\tilde{A}}(s)$.  Using \Cref{thm:hkpr}, it follows that
  $\norm{\mu} \le 4 \norm{\tilde{A}} \le 32 \norm{\tilde{A}^*} \le
  32\norm{A^*}$.

  We now bound $\norm{\rho}$.  Define $\rho^*(s) = \max_{c \in A^{-1}(s)}
  w(c)$.  Note that $\rho^*$ is the load vector of a ``partial'' assignment
  (not all clients are assigned) that assigns to each server at most once.
  Since $w$ can be interpreted as the load vector of an assignment that assigns
  \textit{every} client to a unique server, we have $\norm{\rho^*} \le
  \norm{w}$.  Now observe that $\rho(s) = \sum_{i=1}^{\log\rho^*(s)} \rho_i(s)
  \le 2\rho^*(s)$ simply because $\rho_i(s)$ is either 0 or $2^i$ for each $i$.
  To complete the bound, notice that $\norm{w} \le \norm{A^*}$; the best
  (hypothetical) assignment would assign every client to a unique server,
  resulting in value $\norm{w}$.  Putting things together, we have shown that
  $\norm{\rho} \le 2\norm{A^*}$.
\end{proof}

It remains to bound the round-complexity of the algorithm.

\begin{claim}
  \Cref{alg:local_weighted} takes $O(\log^3\!{n})$ rounds in the
  \local model.
\end{claim}
\begin{proof}
  In the \local model, we can easily emulate
  \Cref{alg:congest_unweighted} (or any algorithm) on the
  client-expansion $\tilde{G}$ at no extra cost; any communication across an
  edge $cs$ simply needs to specify which $c_i$ in the expansion the message is
  to/from.  Since $W \le n$, \Cref{alg:congest_unweighted} still runs
  in $O(\log^3\!{n})$ rounds in the \local model (see
  \Cref{rem:alg1-local}).  The main for-loop is run in parallel, and so
  we only need to bound the round-complexity of its body.
  \Cref{lin:compute-matching} is the only line inside the loop that
  requires (additional) communication, and this again only takes
  $O(\log^3\!{n})$ rounds.  The total round-complexity is therefore
  $O(\log^3\!{n})$.
\end{proof}

This concludes the proof of \Cref{thm:pnorm-weighted}. 

\subsection{An $O(1)$-approximate $O(m\log^3\!{n})$-time sequential algorithm}

We now show that our approach can also be used to compute an
$O(1)$-approximation to the  weighted all-norm load balancing problem in
near-linear time in the standard sequential setting, proving the following
theorem. 

\begin{theorem}\label{thm:sequential}
  In the standard sequential model, there is a deterministic algorithm to
  compute an $O(1)$-approximate solution to the weighted all-norm load
  balancing problem that runs in $O(m\log^3\!{n})$ time.
\end{theorem}

Previously, Azar et al.~\cite{AERW04} showed a 2-approximate algorithm for this
problem, which runs in two phases: (1) compute an optimal fractional assignment
and (2) round the fractional assignment, which incurs a 2-approximation. But
their algorithm computes the optimal fractional assignment using the ellipsoid
method to solve a linear program with exponentially many constraints, and hence
incurs a large polynomial runtime.

Our algorithm uses the same rounding procedure as~\cite{AERW04}, but instead of
computing an exact fractional assignment, we compute an $O(1)$-approximate
split assignment in near-linear time by simulating our distributed approach in
the sequential setting. To this end, we will need the following subroutine:

\begin{lemma}\label{lem:sequential-flow}
  Given any bipartite graph $G = (C \cup S, E)$ and capacity functions
  $\kappa$, $\tau$, it is possible to compute a $(\kappa,\tau)$-matching with
  no augmenting paths of length $\leq 9\log(n)$ in $O(m\log^2\!n)$ time in the
  sequential setting.
\end{lemma}

\begin{proof}
Note that a $(\kappa,\tau)$-matching corresponds to the following flow problem.
Every edge in $E$ gets infinite capacity; there is a dummy source $v_s$ and for
every client $c \in C$ there is an edge $(v_s,c)$ of capacity $\kappa(c)$;
there is also a dummy sink $v_t$ and for every server $s \in S$ there is an
edge from $s$ to $v_t$ of capacity $\tau(s)$. It is immediate to verify that
any $v_s$-$v_t$ flow in this network corresponds to a $(\kappa,\tau)$-matching
and vice versa. 

We now show how to compute a solution to this flow problem that contains no
augmenting paths of length $9\log(n) \geq 8\log(n) + 2$ which corresponds to
the desired $(\kappa, \tau)$-matching.

The algorithm simply runs $9\log(n)$ successive iterations of blocking flow. A
blocking flow in a capacitated graph can be computed in $O(m\log{n})$ time
using the dynamic tree structure of Sleator and Tarjan~\cite{SleatorT83}.
\end{proof}

We are now ready to show our algorithm to compute a split assignment.

%First, we translate the notion of fractional assignment into the terminology used in our paper. Recall the client-expansion $\tilde{G}$ of $G$ defined in Section \ref{sec:weighted}, where there are $w(c)$ copies of client $c$, each of weight 1; note that an (integral) assignment in $\tilde{G}$ corresponds to a fractional assignment in $G$. Let $\tilde{A^*}$ be an optimal assignment of $\tilde{G}$ for every $p$; since $\tilde{G}$ has unweighted clients, such an assignment exists by Lemma \ref{lem:optimal-exists}. We now show how to compute an assignment in $\tilde{G}$ that is an $O(1)$-approximation to $\tilde{A^*}$.

\begin{lemma}\label{lem:sequential-fractional}
  Let $G = (C \cup S, E)$ be a bipartite graph with client-weights $w(C)$.
  There exists a sequential algorithm that, for some constant $K$, in
  $O(m\log^3\!{n})$ time computes a split assignment $\afrac$ such that
  $\norm{L_{\afrac}}_p \leq K \norm{L_{\afracopt}}_p$ for every $p \geq 1$
  (including $p = \infty$).
\end{lemma}

\begin{proof}
Recall from Observation~\ref{obs:frac} that the optimal split assignment
$\afracopt$ corresponds to an optimal (integral) assignment $\tilde{A^*}$ in
the client-expanded graph $\tilde{G}$; server loads in the two solutions are
the same, so $\norm{L_{\afracopt}}_p = \norm{L_{\tilde{A^*}}}_p$. We obtain our
split assignment $\afrac$ by simulating
Algorithm~\ref{alg:congest_unweighted} on the graph $\tilde{G}$: by
Theorem~\ref{thm:unweighted-pnorm}, this yields the desired $O(1)$-approximation.
We now describe how to execute the simulation in the sequential model and how
to convert between the perspectives of split assignment in $G$ and
integral assignment in $\tilde{G}$. 

Firstly, in Line 2 of Algorithm~\ref{alg:congest_unweighted}, we need to a
compute a $(1,B)$-matching in $\tilde{G}$ with no short augmenting paths. This
is equivalent to a $(w,B)$-matching in $G$, which we compute in $O(m\log^2\!n)$
time using Lemma~\ref{lem:sequential-flow}. 

Secondly, in Line 4 of Algorithm~\ref{alg:congest_unweighted}, each client-copy
$\tilde{c}$ in $\tilde{G}$ must find the minimum $B$ such that $\tilde{c}$ is
matched in $x_B$. We need to convert this line to the language of split
assignments. In particular, note that in our sequential simulation, $x_B$ is a
$(w,B)$-matching in $G$ rather than a $(1,B)$-matching in $\tilde{G}$. It is
easy to see that the following simulates Line 4. For each client $c$ in $G$,
let $s_B(c)$ be the set of servers incident to $c$ in $x_B$: if an edge $cs$
has multiplicity $\alpha$ in $x_B$, then $s$ appears $\alpha$ times in
$s_B(c)$. To construct the split assignment $\afrac$, first assign $c$ to
the server in $s_1(c)$ (if any). Then assign $c$ to an arbitrary $|s_2(c)| -
|s_1(c)|$ servers from $s_2(c)$, an arbitrary $|s_4(c)| - |s_2(c)|$ servers
from $s_4(c)$, and more generally an arbitrary $|s_B(c)| - |s_{B/2}(c)|$
servers from $s_B(c)$. It is not hard to check that the resulting split
assignment is equivalent to some integral assignment in $\tilde{G}$ formed by
executing Line 4 of Algorithm~\ref{alg:congest_unweighted} in $\tilde{G}$. It
is also easy to see that for each $x_B$ the assignments can be performed in
$O(m)$ time, for a total of $O(m\log{n})$ time.

The running time of the algorithm is thus dominated by the time for computing
matchings $x_B$. Each takes $O(m\log^2\!{n})$ time to compute
(Lemma~\ref{lem:sequential-flow}), and there are $O(\log(nW)) = O(\log{n})$
values of $B$, so the total run-time is $O(m\log^3\!{n})$. \qedhere

%in each $(w,B)$-matching $x_B$, assign an arbitrary ordering to the edges incident to every client $c$ (an edge with multiplicity greater than $1$ will appear multiple times in this ordering); for every client $c$ and every integer $k \leq w(c)$, find the minimum $B$ such that c has at least $k$ incident edges in $x_B$, let $cs$ be the $k$th edge incident to $c$ in the matching $x_B$,  

%Let $\tilde{A*}$ be the optimal (integral) assignment in the client-expanded graph $\tilde{G}$, and recall from Obersvation~\ref{obs:frac} that $\tilde{A*}$ precisely corresponds to an optimal fractional solution $\afracopt$ with $\norm{\afracopt}_p = \norm{\tilde{A*}$ for all $p$.  
\end{proof}

Finally, we round the split assignment to an integral assignment using the
rounding procedure of~\cite{AERW04}, which is described in the proof of
Lemma~\ref{lem:cyclecancelling}. The rounding procedure has two steps: cycle
cancelling and computing a matching in a tree. The second can clearly be done
in $O(m)$ sequential time. Cycle cancelling can be done deterministically in
$O(m\log{n})$ time (see, e.g.,~\cite{KangP15}). The total time for rounding is
thus $O(m\log{n})$.  (Note that in the distributed setting we only relied on
the \emph{existence} of such a rounding procedure, because it is unclear how to
implement cycle canceling efficiently in the \local{} model.)

Following the exact same argument as in~\cite{AERW04} or in the proof of
Lemma~\ref{thm:pnorm-weighted} of this paper, since our split assignment was an
$O(1)$-approximation (Theorem~\ref{lem:sequential-fractional}), the integral
assignment formed by rounding also yields a $O(1)$-approximation. This
concludes the proof of Theorem~\ref{thm:sequential}.

\section{Acknowledgements}

We thank Shyamal Patel and Cliff Stein for pointing out an error in the proof
of Lemma~\ref{lem:sat-opt} in the conference version of the paper.

\bibliographystyle{abbrv}
\bibliography{main}

\clearpage
\appendix

\part*{Appendix}
\section{Distributed Backup Placement with Replication Factor}\label{app:dbp}

In the \emph{distributed backup placement problem}, the input is a graph $G =
(V, E)$.  There are a set of nodes $C \subseteq V$ called the \emph{clients}
which host files that should be backed up on over a set of nodes $S \subseteq
V$ called the \emph{servers}.  Unlike the load balancing problem, the clients
and servers here need not partition $V$ or even be disjoint.  Each client $c$
hosts a file of size $w(c)$ and may only backup the file on adjacent servers;
we will refer to $w(c)$ as the \emph{weight} of client $c$.  When all of the
client-weights are the same, we call the instance \emph{uniform}.  Finally, a
\emph{replication factor} $r$ specifies the number of \textit{distinct} servers
that each client must be backed up on; each client must be assigned to $r$
distinct adjacent servers.  The goal is to minimize the maximum server load in
the resulting assignment; as before, the load of a server is the sum of the
client-weights assigned to it. Following the paper of Halld{\'o}rsson et
al.~\cite{HKPR18}, we assume that every client has degree at least $r$, since
otherwise there is no solution to the problem.

As was shown by Halld{\'o}rsson et al.~\cite{HKPR18}, when the replication
factor is one, the distributed backup placement problem can be reduced to the
load balancing problem in a natural way. Form a bipartite graph $G'$ of the
clients and servers (a node for a server may appear on both sides of the
partition).  Clients and servers are adjacent in the new graph if and only if
they were adjacent in $G$.  A solution to the load balancing problem in $G'$
directly corresponds to a solution to the distributed backup placement problem
in $G$.  The results in our paper therefore immediately give improved bounds
for the distributed backup placement problem with replication factor one.

We show that our approach can also be used to handle replication factor larger
than 1. For simplicity, we only extend our results that are most close related
to the existing state of the art by Halld{\'o}rsson et al.~\cite{HKPR18}. Their
paper shows that for any replication factor, in $\polylog(n)$ rounds, it is
possible to compute a $O(\log{n}/\log\log{n})$-approximation to distributed
backup placement with uniform client weights. They also show a simple reduction
which gives a $O(\log^2\!{n}/\log\log{n})$-approximation for general client
weights. Existing results on distributed backup placement focus only on
minimizing maximum load (not general $\ell_p$-norm), so we will do the same.
% it is likely our results can extended to approximation every $\ell_p$-norm as
% well.

In this section, we show that our \Cref{thm:hkpr} can be extended to the
problem of distributed backup placement with arbitrary replication factor, thus
improving upon the result of Halld{\'o}rsson et al.~\cite{HKPR18}. In
particular, we achieve the following:

\begin{theorem}\label{thm:dbp-unweighted}
  Given an instance of the backup placement problem with uniform client weights
  and replication factor $r$, an $8$-approximate solution can be found w.h.p.\
  in the \congest{} model within $O(\log^5\!{n})$ rounds.
\end{theorem}

Using the same reduction from weighted clients to unweighted clients as in
\Cref{thm:weighted-reduction}, we also obtain the following improvement
over Corollary 27 in~\cite{HKPR18}.

\begin{theorem}\label{thm:dbp-weighted}
  Given an instance of the backup placement problem with non-uniform client
  weights and replication factor $r$,  an $O(\log{n})$-approximate solution can
  be found w.h.p.\ in the \congest{} model within $O(\log^5\!{n})$ rounds.
\end{theorem}

\subsection{The Setup}

Our Algorithm for \Cref{thm:dbp-weighted} follows the same structure as
\Cref{sec:unweighted} for load balancing. We just need small
modifications to handle arbitrary replication factor $r$, rather than the
replication factor $1$ of standard load balancing. 

To this end, we first generalize our notion of a $(1,B)$ matching

\begin{definition}\label{def:dbp-matching}
  Given any positive integers $B,r$, we say that $x \subseteq E$ is a
  $(1,B,r)$-matching if every client has degree at most $r$ in $x$, and every
  server has degree at most $B$. Note that in this definition, every edge has
  multiplicity at most 1, which is why a $(1,B,r)$-matching is different from a
  $(r,B)$-matching. We say that $x$ is client-perfect if every client has
  degree $r$ in $x$, and we say that a client is unsaturated if it has degree
  strictly less than $r$. An $x$-augmenting path is defined the same way as
  before. 
\end{definition}

\begin{observation}
  A client-perfect $(1,B,r)$-matching is a solution to backup placement with
  replication $r$ that has maximum server load $B$. 
\end{observation}

Our structural lemma for $(\kappa,\tau)$-matchings (\Cref{lem:expansion})
can easily be extended to the case of $(1,B,r)$-matchings; the proof is the
same.

\begin{lemma}[Extension of \Cref{lem:expansion}]\label{lem:dbp-expansion}
  If $G$ contains a client-perfect $(1, B, r)$-matching and $x$ is a
  $(1,2B,r)$-matching, then either $x$ is client-perfect or there is an
  $x$-augmenting path of length at most $4\lceil \log{n} \rceil + 1$.
\end{lemma}

Finally, given any positive integers $B,r$, we can generalize
\Cref{lem:hksr} to efficiently compute a $(1,B,r)$-matching with no short
augmenting paths: 

\begin{lemma}[\cite{HKPR18}] (Extension of \Cref{lem:hksr})\label{lem:dbp-hksr}
  There exists an $O(k^3\!\log{n})$-round randomized algorithm in the
  \congest{} model that, with high probability, given a graph $G = (C \cup S,
  E)$, and positive integers $B,r,k$, computes $(1, B,r)$-matching with no
  augmenting paths of length less than $k$.
\end{lemma}

The proof of the above lemma is the same as that of \Cref{lem:hksr}. In
particular, the proof combines two existing results. The algorithm of Lotker et
al.~\cite{LPP15} computes a (1,1)-matching with no augmenting paths of length
$\leq k$ in $O(k^3\log{n})$ rounds. The paper of Halld{\'o}rsson et
al.~\cite{HKPR18} then shows a black-box extension from $(1,1)$-matching to
$(1,B,r)$-matching which does not increase the round-complexity. In particular,
they reduce from a problem called $f$-matching, which captures the setting
where there can be multiple copies of \emph{both} clients and servers, but
where each edge can only be used once; by using $r$ copies of each client and
$B$ copies of each server, we get a $(1,B,r)$-matching. See~\cite{HKPR18} for
more details.

\subsection{The Algorithm}

Our algorithm is basically the same as our algorithm for load balancing
(\Cref{alg:congest_unweighted}). We now describe the changes we need
to make to handle replication factor $r$.

Firstly, in \Cref{line:unweighted-augment} of
\Cref{alg:congest_unweighted}, instead of computing a $(1,B)$-matching
$x_B$ with no short augmenting paths, we invoke  \Cref{lem:dbp-hksr} to
compute a $(1,B,r)$-matching $x_B$ with no short augmenting paths.

Secondly, in \Cref{line:unweighted-assign} of
\Cref{alg:congest_unweighted}, each client $c$ locally finds the
minimum $B$ such that $c$ is matched $r$ times in $x_B$ and then assigns itself
to those $r$ servers. (Note that if $c$ is assigned to $<r$ servers in some
$x_{B'}$, then $c$ simply ignores $x_{B'}$.)

The round-complexity of the algorithm is clearly the same $O(\log^5\!{n})$ as
in \Cref{alg:congest_unweighted}. The approximation analysis is
exactly the same as in the proof of \Cref{thm:hkpr}, so the algorithm
computes an $8$-approximate assignment for backup placement.

\end{document}